\DeclareMathOperator\lcm{lcm}
\DeclareMathOperator\ord{ord}
\newtheorem{lemma}{Lemma}
\begin{document}

\title{Factoring Safe Semiprimes with a Single Quantum Query}
\author{Frédéric Grosshans} 
\affiliation{Laboratoire Aimé Cotton, CNRS, Univ. Paris-Sud, ENS Cachan,
Univ. Paris-Saclay, 91405 Orsay, France}
\author{Thomas Lawson} \affiliation{LTCI --- T\'el\'ecom ParisTech, 23 avenue d'Italie, 75013, Paris, France}
\author{François Morain} \affiliation{École Polytechnique/LIX and
Institut national de recherche en informatique et en automatique (INRIA)}
\author{Benjamin Smith} \affiliation{INRIA and École Polytechnique/LIX}

\date{\today}
\begin{abstract}
Shor's factoring algorithm (SFA), by its ability to efficiently factor large numbers, has the potential to undermine contemporary encryption. 
At its heart is a process called order finding, which quantum mechanics lets us perform efficiently. 
SFA thus consists of a \emph{quantum order finding algorithm} (QOFA), bookended by classical routines which, given the order, return the factors. 
But, with probability up to $1/2$, these classical routines fail, and QOFA must be rerun. 
We modify these routines using elementary results in number theory, improving the likelihood that they return the factors.

The resulting quantum factoring algorithm is better than SFA at factoring safe semiprimes, an important class of numbers used in cryptography. 
With just one call to QOFA, our algorithm almost always factors safe semiprimes. 
As well as a speed-up, improving efficiency gives our algorithm other, practical advantages: unlike SFA, it does not need a randomly picked input, making it simpler to construct in the lab; and in the (unlikely) case of failure, the same circuit can be rerun, without modification.

We consider generalizing this result to other cases, although we do not find a simple extension, and conclude that SFA is still the best algorithm for general numbers (non safe semiprimes, in other words). Even so, we present some simple number theoretic tricks for improving SFA in this case. 

\end{abstract} 

\maketitle

\section{Introduction} 
Shor's factoring algorithm (SFA) \cite{Shor1994} 
is a promising application of quantum computers. 
Like classical 
factoring algorithms, 
it returns a nontrivial factor of an integer $N= \prod_{i=1}^{k} p_i ^{e_i}$, 
where the $p_i$ are primes, with $p_i \neq p_j$ and $e_i > 0$. 
But, unlike its classical counterparts, 
SFA does this efficiently---its runtime increases just polynomially 
in the length of $N$, 
as opposed to subexponentially for the best classical algorithms. 
This speed-up is due to the quantum subroutine at the heart of SFA, 
the quantum order finding algorithm (QOFA).  
QOFA efficiently calculates 
the \emph{order} of an integer $a$ modulo
$N$: that is, given coprime integers~$a$ and~$N$,
QOFA returns 
the smallest integer $r > 0$ such that $a^r \bmod{N} = 1$, 
where $\bmod$ is the \emph{modulo operator}. 
(In practice, QOFA is only guaranteed to return a divisor of the order.)

SFA proceeds as follows:
an integer $a$ is chosen at random.
In the unlikely event that $\gcd(a,N)$ 
(the \emph{greatest common divisor} of $a$ and $N$)
is greater than~$1$, then we have already found a nontrivial factor of $N$.
Otherwise, we use QOFA to find the order $r$ of $a$ modulo $N$.
If $r$ is even (which occurs with probability at least $1/2$), then
a factor of $N$ may be given by $\gcd (a^{r/2} \pm 1, N)$.
If a factor is not found, or if $r$ is odd,
then the procedure is relaunched with a different random value of $a$ \cite{Shor1994, Shor1997, nielsen+chuang2000}.

Much work has been done on the quantum part of the algorithm \cite{Vandersypen2001, lu-prl-99-250504, Lanyon2007, Mathews2009, eml-nphot-6-773, lu-nphys-10-719, Monz16}. 
In this letter we take a different approach: 
we reduce the probability of failure by applying elementary results in number theory to the \emph{classical} part of SFA. 

Our main result concerns \emph{safe semiprimes},
defined here to be the product $N=p_1 p_2$ of two safe primes $p_1$ and $p_2$ of the form  $p_i = 2q_i +1$, where $q_i$ is prime.
Safe primes were introduced by Sophie Germain in her study of Fermat's
last theorem \cite{Laubenbacher10}.

They are now  called \emph{safe} because of their relation to 
\emph{strong primes}---primes $p$ such that $p-1$ has at least 
a large prime factor.  
The \emph{strong} and \emph{safe} name of these primes
comes from the hardness to factor their products
using Pollard's ``$p-1$'' factoring algorithm \cite{Pollard1974}.
The safe primes are the strongest of all strong prime, and are 
well defined in terms of number theory.
Until the discovery of elliptic curve factorization \cite{Lenstra87}, 
they were widely held as being the hardest to factor for this reason.
This led to the recommendation of using the related strong primes for RSA
cryptography, that is prime such that $p-1$ has large prime factors
but this recommendation is contested \cite{RivestSilverman00}, 
even if they are still recommended in the 
National Institute of Standard’s FIPS 186-4 \cite{FIPS186-4}.

Independently of their relationship to strong semiprimes, safe semiprimes are 
used in the Cramer--Shoup signature scheme \cite{CramerShoup00}, where the 
primality  of $q_1$ and $q_2$ allows to reduce the cost of the signature 
generation and, more importantly, to prove the security of the scheme 
against an adaptive chosen message attack under the
strong RSA assumption. 
Both properties come from the fact that a random hash $h$ generates the group
of quadratic residues modulo $N$ with overwhelming probability.

We present an algorithm which, given one call to QOFA, factors safe semiprimes \(N\) with probability $ \approx 1 - 4 / N$. 
In contrast, for a single query of QOFA, SFA's classical routines fail to factor these numbers half the time. 

At the core of our algorithm is the simple observation that for these
numbers, the output \(d\) of QOFA always 
divides \(2q_1q_2\), and all divisors of \(2q_1q_2\), except $1$ and $2$,
allow to compute $p_1$ and $p_2$.
This information, contained in $d$, is not used optimally by SFA.
Interestingly, the property that makes safe semiprimes the hardest
numbers to factor using Pollard's classical $p-1$ algorithm
is exactly that which makes them easy to factor with our quantum algorithm. 
We also consider moving from the special case of safe semiprimes to general numbers. Although the proofs do not extend naturally, we find a few tricks that may make factoring general numbers easier.

The idea of modifying the classical parts of SFA so as to reduce the
dependence on QOFA is not new, although previous studies 
fine-tune the algorithm, rather than re-think part of it, as we try to do.
References \cite{Leander2002, Markov2012, Lawson2014} discuss 
how best to pick $a$ to maximize the probability of factoring. 
References \cite{Shor1997, Knill1995} increase efficiency 
by extracting~$r$ from failed outputs of QOFA. 
These results offer practical advantages beyond simple 
theoretical improvements:
although efficient in theory, in practice each call to QOFA needs an individual experiment, which is time-consuming and costly in resources.
Indeed, for some architectures---photonics, for instance---each experiment requires a new circuit.

Beyond its low failure rate, our algorithm has two advantages when factoring safe semiprimes (and possibly
a larger class of numbers). The first is that
rather than picking the value of $a$ randomly, we can keep it constant. 
In this letter we use $a=2$, 
but other sensible choices can be made.
The second is that, when our algorithm does fail, exactly the same
circuit can be used again, unlike SFA which needs to be reset 
with a different value of \(a\). 

We note that this is not the first time specific values of $a$ have been used in investigations of SFA. 
Experimentalists, held back by young technology, have so far picked values of $a$ that let them simplify the circuit and focus on certain interesting parts 
of the algorithm \cite{Vandersypen2001, lu-prl-99-250504, Lanyon2007, Mathews2009, lu-nphys-10-719, eml-nphot-6-773, Monz16}. 
Reference \cite{Smolin2013} highlighted the danger of this approach, arguing that the value of $a$ should not rely on \emph{knowledge} of the factors. 
We emphasize that the success of our algorithm does not rely on knowing the factors in advance:
our value \(a=2\) is obviously independent of the number being
factored.


\section{Number theoretic tools}

We write \(x\mid y\) to mean \(x\) is a \emph{divisor} of \(y\):
that is, there exists an integer \(z\) such that \(y = xz\).
A positive integer is \emph{prime} if it has no 
divisors other than itself and~\(1\).
If \(y_1,\ldots,y_k\) are integers,
then \(\gcd(y_1,\ldots,y_k)\) denotes their greatest common divisor:
that is, the largest positive integer \(x\)
such that \(x\mid y_i\) for all \(1 \le i \le k\).
Dually, \(\lcm(y_1,\ldots,y_k)\) denotes the least common multiple 
of the \(y_i\):
that is, the smallest positive integer \(x\)
such that \(y_i\mid x\) for all \(1 \le i \le k\).

Two concepts from elementary number theory 
will be essential throughout this letter. 
In the following
\[
    N = \prod_{i=1}^k p_i^{e_i}
    \ ,
\]
where the \(p_i\) are distinct primes.
The first important concept is the Euler $\phi$ function, 
\begin{align}
    \label{eq:phi-def}
    \phi(N) = \prod_{i=1}^k p_i^{e_i -1} (p_i-1), 
\end{align}
which gives the number of integers \(x\) in \([1,N]\) such that
$\gcd(x,N) = 1$.  
The second is the Carmichael $\lambda$ function, $\lambda(N)$,  defined as the smallest integer such that
$x^{\lambda(N)} \bmod{N} = 1$
for \emph{every} $x$ satisfying $\gcd(x,N) = 1$.
It follows that if \(r\) is the order of some \(a\) modulo \(N\),
then $r\mid \lambda(N)$.
Carmichael's theorem shows that if $N$ factors as above,
then
\begin{align}
    \label{eq:lambda-def}
    \lambda(N) =  \lcm(\lambda(p_1^{e_1}),\ldots,\lambda(p_k^{e_k}))
    \ , 
\end{align}
where 
\begin{align}
    \label{eq:lambda-odd}
    \lambda(p_i^{e_i})
    &= 
    \phi(p_i^{e_i}) 
    \text{ for odd primes } p_i,\\
    \label{eq:lambda-even}
    \lambda(2^{e_i})
     &= \frac{1}{2}\phi(2^{e_i}) = 2^{e_i-2} \text{ for }  e_i > 2, 
\end{align}
and \(\lambda(2^{1})=1\), \(\lambda(2^{2})=2\).

\section{Factoring safe semiprimes}

A \emph{semiprime} is a product of two primes, hence 
finding any nontrivial factor of a semiprime
amounts to finding its complete prime factorization.
An odd prime \(p\) is called \emph{safe}
if \((p-1)/2\) is also prime
(note that \(p-1\) is always divisible by \(2\) when \(p\) is odd,
so \(p\) being safe can be understood to mean that \(p-1\) is 
``as close to prime as possible'').
A \emph{safe semiprime} is a product of two safe primes.

\subsection{Semiprimes and Euler's \(\phi\) function}
\label{sec:semiprimes}

If $N = p_1p_2$ is a semiprime \cite{RSA1978} 
(including---but not restricted to---the case of safe semiprimes),
then
knowing $\phi(N)$ is sufficient for finding a factor of $N$. 
Indeed, the factors can be found immediately and deterministically. 
In this case $\phi(N) = N +1 - (p_1+p_2)$ which, 
when solved simultaneously with $N=p_1 p_2$, gives a factor. 
Explicitly, expanding the product \((X-p_1)(X-p_2)\)
shows that $p_1$ and $p_2$ are the solutions of 
the quadratic equation
\begin{equation}
    \label{eq:quadratic-eqn}
    X^2 - (N - \phi(N) +1)X + N = 0
    \ .
\end{equation}

For general $N= \prod_{i=1}^{k} p_i ^{e_i}$, 
Miller~\cite{Miller1976} has shown that
knowing both $N$ and
$\phi(N)$ lets us efficiently factor $N$
probabilistically,
or deterministically under the extended Riemann hypothesis.

\subsection{Shor's factoring algorithm for safe semiprimes}

Let \(N = p_1p_2\) be a safe semiprime,
so \(p_1 = 2q_1 + 1\) and \(p_2 = 2q_2 + 1\)
with \(q_1\) and \(q_2\) both prime.
SFA chooses a random residue \(a\) modulo~\(N\)
and attempts to compute its order \(r\) using a call to QOFA.
SFA iterates this process until \(r\) is even and \(a^{r/2}\) is not
congruent to \(-1\) modulo \(N\);
if \(r = \ord_N(a)\), then \(\gcd(a^{r/2}-1,N)\) yields a factor of \(N\)
(either \(p_1\) or \(p_2\)).

We have \(\phi(N) = 4q_1q_2\)
and \(\lambda(N) = 2q_1q_2\).
As noted above,
if \(r\) is the order of some \(a\) modulo \(N\),
then $r\mid \lambda(N)$;
so \(r\) must be one of the eight divisors of \(2q_1q_2\).
Their frequencies are listed in
Table~\ref{tab:RSA-orders}.

\begin{table}
    \caption{The possible multiplicative orders 
        for integers modulo safe semiprimes
        \(N = p_1p_2\) 
        (where \(p_i = 2q_i + 1\) with \(p_1\), \(p_2\), \(q_1\), and
        \(q_2\) all prime). 
        The number of integers is deduced from the isomorphism between the 
        multiplicative group \(\left(\mathbb{Z}/n\mathbb{Z}\right)^{\times}\)and the product of cyclic
        groups \(C_{2q_1}\times C_{2q_2}\).
    }
    \label{tab:RSA-orders}
    \begin{center}
        \begin{ruledtabular}
        \begingroup \squeezetable 
        \begin{tabular}{ccc}
            \multirow{2}{*}{\(r\)}       & Number of integers 
                                         & \multirow{2}{*}{Restrictions on $a$}
            \\
            & \(1 \le a< N\) of order \(r\) & \\
            \hline
            \(1\)       & 1 &\(a = 1 \) \\
            \hline
            \multirow{2}{*}{\(2\)}       & \multirow{2}{*}{3} & \(a =
            N-1\) and two
            \\
             & & other $a\ge\sqrt{N+1}>2$
            \\
            \hline
            \(q_1\)     & \(q_1-1\) &$a\neq2$\\
            \(q_2\)     & \(q_2-1\) &$a\neq2$ \\
            \hline
            \(2q_1\)    & \(3(q_1-1)\)&$a\neq2$\\ 
            \(2q_2\)    & \(3(q_2-1)\)  &$a\neq2$ \\
            \hline
            \(q_1q_2\)  & \((q_1-1)(q_2-1)\) \\
            \(2q_1q_2\) & \(3(q_1-1)(q_2-1)\) \\
        \end{tabular}
        \endgroup
        \end{ruledtabular}
    \end{center}
\end{table}

Despite its name, QOFA does \emph{not} always return the order of \(a\) modulo \(N\).
In practice, the quantum Fourier transform (QFT) in QOFA 
yields an approximation of the fraction \(t/\ord_N(a)\) 
for some integer \(t\) uniformly  randomly chosen from
\(\{1,\ldots,\ord_N(a)\}\) \cite{nielsen+chuang2000}.
The quality of this approximation  
depends on the implementation of the QFT: 
the output is always a bit granular and blurred,
depending on the number of qubits \cite{Shor1997, Knill1995, nielsen+chuang2000}; 
circuits approximating the QFT are available to save resources at the expense of 
precision \cite{Coppersmith1994, CleveWatrous2000, Cheung2004}.
In all theses cases; the exact value of the output of an ideal QFT can be found
with high probability by trying values neighboring close to the output 
\cite{Shor1997,Knill1995}.
As our primary theoretical concern is independent of this 
implementation-dependent consideration, we simplify our 
theoretical analysis by assuming the implementation to 
actually give us  \(t/\ord_N(a)\) exactly.
QOFA then uses a continued fraction expansion to get 
\(d = \ord_N(a)/\gcd(t,\ord_N(a))\). 
The probability that this is the true order 
depends visibly on the number of nontrivial divisors of \(\ord_N(a)\). 

We can use Table~\ref{tab:RSA-orders} to prove that
if \(N\) is a safe semiprime and we sample \(a\)
uniformly at random, 
then each iteration of SFA
yields a factor of \(N\) with probability \(\le 1/2\).
The probability would be exactly \(1/2\)
if QOFA were a perfect order finding algorithm:
looking at Table~\ref{tab:RSA-orders}, we see that
for a given \(a\) we have \(P(r \text{ is even}) = 3/4\), 
but then \(P(a^{r/2} \not\equiv -1 \pmod{N}) = 2/3\);
so the probability that a uniformly randomly chosen \(a\)
yields a factor is \(3/4\cdot2/3 = 1/2\).
Taking QOFA imperfection into account (that is, the fact that QOFA
sometimes returns a proper divisor of \(\ord_N(a)\))
decreases this probability.
It follows that if \(N\) is a safe semiprime, 
then the expected number of iterations of SFA---and hence 
the expected number of calls to QOFA---is \(\ge 2\).
When $a=2$ is fixed and the safe semiprime is chosen randomly, SFA is 
conjectured to succeed with
probability $\tfrac12$. 
A numerical investigation of all the 960,392,564 safe primes $p_i\le 10^{12}$ 
indeed gives success probability $\tfrac12 - 2.06\times 10^{-10}$.

\subsection{Factoring safe semiprimes with one call to QOFA}

We propose a change in the classical part of SFA. This leads to
a new, simple quantum factoring algorithm
which almost always succeeds in factoring a safe semiprime 
\(N = p_1p_2 = (2q_1+1)(2q_2+1)\)
with 
exactly \emph{one} call to QOFA.

The algorithm is given in pseudocode in Algorithm~\ref{alg:alg-1}. 
It is based on  
Lemma~\ref{lemma:orders-of-2} which shows that
the only two possible values for \(\ord_N(2)\) 
are \(q_1q_2\) and \(2q_1q_2\),
and that QOFA returns a useful factor of \(\ord_N(2)\) 
with extremely high probability.

\begin{lemma}
    \label{lemma:orders-of-2}
    Let \(N = p_1p_2 = (2q_1+1)(2q_2+1)\) be a safe semiprime,
    with \(q_1\not=q_2\) and \(q_1,q_2 > 2\) (so in particular,
    \(p_1\) and \(p_2\) are distinct safe primes greater than~\(3\)).
    Then 
    \begin{enumerate}
        \item
            \(\ord_N(2)\) is either \(q_1q_2\) or \(2q_1q_2\).
        \item
            QOFA 
            yields \(d \in \{q_1,q_2,q_1q_2,2q_1,2q_2,2q_1q_2\}\)
            with probability \(1 - \frac{1}{q_1q_2} \sim 1 - \frac{4}{N}\);
            otherwise, \(d\in\{1,2\}\).
    \end{enumerate}
\end{lemma}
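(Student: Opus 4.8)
The plan is to reduce both claims to the orders of $2$ modulo the individual primes $p_1,p_2$, and then, for part~2, to carry out a one-line count over the possible outputs of the continued-fraction step. First note that $N$ is odd, being a product of odd primes, so $2$ is a unit modulo $N$ and $\ord_N(2)$ is well defined. By the Chinese Remainder Theorem, $\left(\mathbb{Z}/N\mathbb{Z}\right)^\times \cong \left(\mathbb{Z}/p_1\mathbb{Z}\right)^\times \times \left(\mathbb{Z}/p_2\mathbb{Z}\right)^\times$, whence $\ord_N(2) = \lcm\!\left(\ord_{p_1}(2),\ord_{p_2}(2)\right)$. Since $p_i - 1 = 2q_i$ with $q_i$ prime, $\ord_{p_i}(2)$ divides $2q_i$ and so lies in $\{1,2,q_i,2q_i\}$; but $\ord_{p_i}(2)=1$ would force $p_i \mid 1$ and $\ord_{p_i}(2)=2$ would force $p_i\mid 3$, i.e.\ $p_i=3$, both excluded because $p_i>3$ (equivalently $q_i>2$, which also makes $q_i$ an odd prime). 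Hence $\ord_{p_i}(2)\in\{q_i,2q_i\}$ for $i=1,2$, and as $q_1\neq q_2$ are distinct odd primes the four possible least common multiples $\lcm(q_1,q_2)$, $\lcm(q_1,2q_2)$, $\lcm(2q_1,q_2)$, $\lcm(2q_1,2q_2)$ collapse to $q_1q_2$ or $2q_1q_2$. That proves part~1. (If desired one can add the remark, not needed here, that the first case occurs exactly when $2$ is a quadratic residue modulo both $p_1$ and $p_2$, i.e.\ when $p_1,p_2\equiv\pm1\pmod 8$.)

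For part~2, set $r=\ord_N(2)$ and recall the idealised QOFA model used in the paper: the algorithm returns $d = r/\gcd(t,r)$ for $t$ chosen uniformly from $\{1,\dots,r\}$, so $d$ ranges over the divisors of $r$ and $\Pr[d = r/g] = \#\{\,t\le r : \gcd(t,r)=g\,\}/r$ for each $g\mid r$. Now split on the two cases from part~1. If $r=q_1q_2$, then $d=1$ happens only for $\gcd(t,r)=r$, i.e.\ $t=q_1q_2$ — one value out of $q_1q_2$ — and every other $t$ gives $d\in\{q_1,q_2,q_1q_2\}$, while $d=2$ cannot occur since $r$ is odd. If $r=2q_1q_2$, then $d=1$ iff $t=2q_1q_2$, and $d=2$ iff $\gcd(t,r)=q_1q_2$, i.e.\ iff $t=q_1q_2$ — two values out of $2q_1q_2$ — and all remaining $t$ give $d\in\{q_1,q_2,2q_1,2q_2,q_1q_2,2q_1q_2\}$. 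In both cases $\Pr[d\in\{1,2\}] = 1/(q_1q_2)$, so $d$ lands in the claimed good set with probability $1 - 1/(q_1q_2)$. Finally $N=(2q_1+1)(2q_2+1)=4q_1q_2+2(q_1+q_2)+1$, so $\dfrac{N}{4q_1q_2} = \left(1+\dfrac{1}{2q_1}\right)\!\left(1+\dfrac{1}{2q_2}\right)\to 1$ and hence $1/(q_1q_2)\sim 4/N$, giving the stated asymptotics.

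The argument is essentially elementary and I do not expect a genuine obstacle; the points that need care are (i) that ruling out $\ord_{p_i}(2)\in\{1,2\}$ really does use the hypothesis $p_i>3$ (and that $q_i>2$ is what guarantees each $q_i$ is an odd prime, so the $\lcm$ simplifications are valid and $q_1q_2$ is odd), and (ii) the modelling of QOFA's output. The genuine quantum step returns only an approximation to $t/r$, which is then snapped to $t/r$ by a continued-fraction expansion; as the paper already discusses, the granularity and blurring are implementation-dependent and, with enough qubits, corrected with high probability, so I would state explicitly that the lemma is proved under the idealisation that this recovery is exact — the arithmetic content of the lemma is independent of that layer.
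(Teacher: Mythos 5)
Your proposal is correct and follows essentially the same route as the paper: part~1 rests on the decomposition $\ord_N(2)=\lcm(\ord_{p_1}(2),\ord_{p_2}(2))$ together with ruling out orders $1$ and $2$ modulo each $p_i$ (you build the possibilities bottom-up from the primes, the paper eliminates them top-down from its table of eight candidate orders, but the content is identical), and part~2 is the same count of the ``bad'' values of $t$ giving $d\in\{1,2\}$ in each of the two cases, yielding probability $1/(q_1q_2)\sim 4/N$. Your explicit remarks on the idealised QOFA model and the asymptotic $N/(4q_1q_2)\to 1$ only make explicit what the paper assumes.
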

\begin{proof}
    Referring to Table~\ref{tab:RSA-orders},
    we know that
    there are only eight possibilities 
    for the order of an integer modulo~\(N\):
    namely, the four odd values
    \(1\), \(q_1\), \(q_2\), and \(q_1q_2\),
    and the four even values
    \(2\), \(2q_1\), \(2q_2\), and \(2q_1q_2\).
    But \(\ord_N(2)\)
    can never be \(1\) or \(2\), since \(N > 3\).
    Now, note that \(\ord_N(2) = \lcm(\ord_{p_1}(2),\ord_{p_2}(2))\)
    (since \(p_1\not=p_2\))
    and \(\ord_{p_i}(2) \mid \lambda(p_i) = 2q_i\).
    Hence, \(\ord_N(2)\) cannot be \(q_1\), \(q_2\), \(2q_1\), or \(2q_2\),
    since otherwise \(\ord_{p_2}(2)\) would in \(\{1,2\}\),
    which contradicts \(p_2 > 5\).
    The only remaining possibilities for
    \(\ord_N(2)\) are \(q_1q_2\) and \(2q_1q_2\),
    which proves the first statement.

    For the second statement,
    recall that QOFA returns \(\ord_N(2)/\gcd(t,\ord_N(2))\)
    for some uniformly randomly distributed integer \(t\)
    in \(\{1,\ldots,\ord_N(2)\}\).
    But we have just seen that the value of \(\ord_N(2)\) 
    can only be \(q_1q_2\) or \(2q_1q_2\).
    If \(\ord_N(2) = q_1q_2\) then
    QOFA only returns 1 if \(t = q_1q_2\),
    which occurs with probability \(1/(q_1q_2)\);
    it cannot return~2.
    If \(\ord_N(2) = 2q_1q_2\)
    then QOFA returns 1 if \(t = 2q_1q_2\)
    and 2 if \(t = q_1q_2\);
    the probability of \(t\) taking one of these two values is
    \(2/(2q_1q_2) = 1/(q_1q_2)\).
    Hence, regardless of the true value of \(\ord_N(2)\),
    the probability of having a ``bad'' \(t\)
    leading to a result of 1 or 2
    is \(1/(q_1q_2)\), which tends to \(4/N\) for large \(N\).
\end{proof}

Let \(d\) be the result of QOFA applied to \(2\) modulo \(N\).
Our algorithm then attempts to recover the factors of \(N\)
from \(d\). 
The cases of odd and even \(d\) are virtually identical,
and we can unify them by setting 
\(s = d\) if \(d\) is even,
and \(s = 2d\) if \(d\) is odd;
then \(s = 2\), \(2q_1\), \(2q_2\), or \(2q_1q_2\).
The case \(s = 2\) is trivial to recognise,
and we can determine whether \(s = 2q_1q_2\) or one of the \(2q_i\)
using the elementary inequality \(2q_1,2q_2 < N/3 < 2q_1q_2\).
If \(s\) is one of the \(2q_i\)
then \(s + 1\) is one of the \(p_i\)
(and \(N/(s+1)\) is the other).
If \(s\) is \(2q_1q_2\)
then we recover \(p_1\) and \(p_2\)
by using \(\phi(N) = 2s\) in Eq.~\ref{eq:quadratic-eqn}
and applying the quadratic formula:
the factors of \(N\) are \(t \pm \sqrt{t^2 - N}\),
where \(t = (N+1)/2-s\).
In the unlikely event that \(s = 2\) (probability \(\sim 4/N\),
according to Lemma~\ref{lemma:orders-of-2}, which is 
negligible even for moderate \(N\)) then we fail to factor~\(N\).

Compared to SFA, the advantages of our algorithm are:
\begin{enumerate}
    \item
        It requires only one call to \textsc{QOFA}.
    \item
        It only ever computes the order of \(2\) modulo \(N\),
        rather than randomly (or specially) chosen \(1 < a < N\);
        this may be helpful when implementing \textsc{QOFA}, 
        since multiplication by \(2\) is a simple bitshift.
    \item
        The probability of failure decreases exponentially 
        with the length (in bits) of ~\(N\).
    \item
        In the unlikely event of failure,
        then we can attempt to factor \(N\) by simply running 
        \emph{exactly the same algorithm} again.
        In contrast to SFA, rather than changing the base, 
        we can repeat with the convenient \(a = 2\). 
\end{enumerate}
In particular, our algorithm 
is never slower than SFA.
Indeed, the only way that SFA can beat it 
is to never call QOFA,
which means the very first randomly chosen~\(a\) 
must be a factor of~\(N\);
but this is extremely improbable for even moderate-sized~\(N\).

Our algorithm and its analysis imply that safe semiprimes,
(which are considered the hardest inputs for classical factoring
algorithms)
are among the easiest inputs for quantum factoring algorithms.

\begin{algorithm}
    \caption{
        Given a safe semiprime \(N = p_1p_2\), where
        \(p_1 = 2q_1 + 1\) and \(p_2 = 2q_2 + 1\) 
        with \(p_1\), \(p_2\), \(q_1\), and \(q_2\) all distinct primes,
        returns \(\{p_1,p_2\}\) or \texttt{Failure} 
    }
    \label{alg:alg-1}
    \begin{algorithmic}[1] 
        \Function{FactorSafeSemiprime}{$N$}
        \If{\(5\mid N\)}
            \Return{\(\{5,N/5\}\)}
        \EndIf
        \State \(d \gets \textsc{Qofa}(2,N)\)
        \If{\(2\mid d\)}
	    \(s \gets d\)
        \Else{ }
         \(s \gets 2d\)
        \EndIf
        \If{\(s = 2\)}
	    \Return{\texttt{Failure}} \;
            \Comment{Prob. \(1/(q_1q_2)\)}
        \EndIf
        \If{\(s < N/3\)} \(p \gets s + 1\)
            \Comment{\(s = 2q_1\) or \(2q_2\)}
        \Else
            \Comment{\(s = 2q_1q_2\)}
            \State \(t \gets (N+1)/2 - s\)
            \State \(p \gets t + \sqrt{t^2 - N}\)
        \EndIf
        \State \Return{\(\{p,N/p\}\)}
        \EndFunction
    \end{algorithmic}
\end{algorithm}

\section{Remarks on factoring other integers}

\subsection{The restriction of Algorithm~\ref{alg:alg-1}
to safe semiprimes}

Algorithm~\ref{alg:alg-1} only works for safe semiprimes:
indeed, its correctness depends on 
the extremely small number of divisors of \(\lambda(N)\) 
when \(N\) is a safe semiprime.
It does not generalize to more general odd composites \(N\),
nor even to products of two primes where only one prime is safe,
without further information on the factors of~\(N\).

For example, suppose \(N = p_1p_2\)
where \(p_1 = 2q_1 + 1\) with \(q_1\) prime 
and \(p_2 = 2q_2q_2' + 1\) where \(q_2\) and \(q_2'\) are prime
(that is, \(p_1\) is safe and \(p_2\) is ``almost safe'').
Then \(\lambda(N) = 2q_1q_2q_2'\), which has sixteen divisors,
so there are sixteen possible orders for integers modulo~\(N\).
Suppose QOFA returns the correct order \(r\) of \(2\) 
(or of any other \(a\)) modulo \(N\).
The most common element order is \(2q_1q_2q_2'\);
but this value of \(r\) does \emph{not} immediately yield factors of \(N\).
If \(r = q_1\),
then \((2r + 1)\) is a factor of~\(N\);
but if \(r = q_2\) or \(q_2'\),
then we generally cannot derive a factor from \(r\).
The problem is that we cannot distinguish between
the cases \(r = q_1\), \(q_2\), and \(q_2'\)
without further information (such as the relative sizes
of the \(p_i\) or \(q_i\)).  

The situation degenerates further as the number of 
prime factors of \(\lambda(N)\) increases,
and indeed as the number of prime factors \(p_i\) of \(N\) increases.
Similarly, the accuracy of QOFA as an order-finding routine
is reduced
(increasing the expected runtime of SFA)
when \(\lambda(N)\) has a more complicated factorization.

\subsection{Simple improvements to SFA for general integers}

For general integers \(N\), therefore,
we return to SFA.
We note, however, that the expected number of calls to QOFA in SFA
for general integers \(N\) can be reduced with two simple
modifications.
First, once the putative order~\(r\) of \(a\) modulo \(N\) is
calculated, we can easily compute \(\gcd(r,N)\), which, 
if not \(1\),  is a nontrivial factor of~\(N\).  
We note that this happens very often for \(N\) with a square factor
(though it never happens for any RSA modulus).

Second, rather than restricting our search to even \(r\),
we could check whether \(r\) is a multiple of a small prime~\(\ell\)
(for example, \(\ell\) in \(\{2,3,5\}\));
if \(a^{r/\ell}\) is not \(1\) modulo all of the factors of \(N\),
then it leads to a nontrivial factor of \(N\).
This essentially amounts to trial division of \(r\).

These tricks, while unlikely to reveal the factors, are far less computationally expensive than rerunning QOFA. They may be worth trying when an iteration of SFA has not found the factors, before recalling QOFA.
Simulating SFA
for composite numbers \(N\) in $[10, 10^8]$
shows that these cheap improvements reduce the proportion
of such \(N\)  
that escape factoring 
after a single correct call to QOFA with \(a = 2\) 
by a factor of 4 (from 6\% to 1.5\%).
For 60\% of the remaining failures, 
we obtain $\ord_N(2)=2^{-k} \lambda(N)$, from which we can
find $\lambda(N)$ and then 
factor $N$ using Miller's algorithm \cite{Miller1976}. This gives a global efficiency gain over SFA of an order of magnitude. 
%

\section{Conclusion}

Changing the classical part of SFA without changing its quantum part, 
we have proposed
a new algorithm factoring safe semiprimes, 
 numbers used in cryptography. 
With high probability, our algorithm finds the factors from just one call to QOFA, and always outperforms SFA. 
This offers an improvement in terms of complexity on SFA, which fails to find the factors from each call to QOFA with probability 50\%,
but it also brings practical benefits. 
First, since each instance of QOFA requires a complicated physical experiment, the potential savings in runtime are large---much larger than that suggested by the improvement in computational complexity. Some architectures have particularly large overheads, as each call to QOFA requires the construction of a new circuit. 

The second benefit is that we can fix the value of $a$, rather than
picking it at random as in SFA. 
We chose $a=2$ because it slightly increases the success probability,
but this improvement is marginal and other sensible
choices exist; experimentalists may even wish to tailor a value to the circuit they are building. 

Finally, in the unlikely event that our algorithm fails, the factors can usually be found by rerunning the exact same quantum circuit, saving time in the lab. 
 
Interestingly, safe semiprimes---which were used in classical encryption precisely because they were believed to be hard to factor---are the easiest numbers to factor with a quantum computer. Two forces push in the same direction: the order is very likely to give $\lambda(N)$ (from which the factors are easily found); and QOFA performs well when given a safe semiprime, returning the order more often than when acting on a general number. 

As we move away from this class to weaker semiprimes our algorithm becomes inefficient, and we revert to SFA, 
boosted by two improvements. 
We note that even classical techniques that are unlikely to work can save time if they avoid rerunning QOFA. 

Much work has been done on improving the quantum part of SFA. We have shown that improving the classical routines, using simple classical mathematics, can speed up quantum factoring by reducing the dependence on the quantum circuit. 
Our intuition is that this is a promising path, which will yield more
improvements in efficiency.

 \bibliographystyle{ieeetr}
\bibliography{Bib}

\end{document}